\newcommand{\fx}{{\bf{x}}}
\newcommand{\fy}{{\bf{y}}}
\newcommand{\fz}{{\bf{z}}}
\newcommand{\fu}{{\bf{u}}}
\newcommand{\fv}{{\bf{v}}}
\newcommand{\fp}{{\bf{p}}}
\newcommand{\beq}{\begin{equation}}
\newcommand{\eeq}{\end{equation}}
\newcommand{\bea}{\begin{eqnarray}}
\newcommand{\eea}{\end{eqnarray}}
\newcommand{\prD}{\mathcal{T}}
\newcommand{\prB}{\mathcal{B}}
\newcommand{\FT}{\mathcal{F}}
\newcommand{\Op}{\mathcal{U}}
\newcommand{\mtx}[1]{{\mathrm{#1}}}
\begin{document}
%
\title{On Landau's eigenvalue theorem and \\ information cut-sets}

\author{\IEEEauthorblockN{Massimo Franceschetti}~\thanks{The author is with the Dept. of Electrical and Computer Engineering
University of California at San Diego
La Jolla, California 92093--0407
Email: massimo@ece.ucsd.edu}}
\maketitle

\begin{abstract}
A variation of Landau's eigenvalue theorem describing the phase transition  of the eigenvalues of a time-frequency limiting, self adjoint operator is presented.
The total number of degrees of freedom of square-integrable,  multi-dimensional, bandlimited functions is defined in terms of Kolmogorov's $n$-width and computed in some limiting regimes  where the original theorem 
cannot  be directly applied. Results are used to characterize up to order  the total amount of information that can be transported in time and space by multiple-scattered electromagnetic waves,  rigorously addressing a question originally posed in the early works of Toraldo di Francia and Gabor. Applications in the context of wireless communication and electromagnetic sensing are discussed.
 \end{abstract}


%
\IEEEpeerreviewmaketitle

\section{Introduction}
How much information can a prescribed electromagnetic waveform transport in time and space? This basic question is of mathematical and physical interest, and has numerous engineering applications, including in communications, sensing, imaging, radar detection and classification systems. Information theory of wireless communication provides a partial answer, but assumes a stochastic model of the fading channel,  taking for granted the physical nature of the quantities which figure in its formalism. In this paper, we take the point of view that communication of information is the transmission of physical states from transmitters to receivers and, as such, it is subject to the laws of nature. We rigorously compute the physical limits for the transport of information by electromagnetic waves in terms of degrees of freedom, and discuss their significance  in an engineering context.

Our results build upon Landau's theorem~\cite{szego}, concerning the asymptotic behavior of the eigenvalues of a certain integral equation arising from the problem of simultaneous concentration of a function and its Fourier transform. This theorem asserts that, in a well defined sense, the eigenvalues undergo a sharp transition from values close to one to values close to zero, and the scale of this transition characterizes the asymptotic dimension of the space of bandlimited functions. 
The problem was originally considered jointly by Landau, Pollak and Slepian in a series of papers~\cite{Slepian1,Slepian2,Slepian3,Slepian4,Slepian5}, of which \cite{SlepianS1} and \cite{SlepianS2} provide excellent tutorial reviews. The precise width of the transition  in the single-dimensional case has been first conjectured  by Slepian~\cite{Slepian}, and finally computed rigorously  by Landau and Widom~\cite{widom}. The theorem we refer here gives  only the first order characterization of the transition, but it describes concentration over arbitrary sets and in arbitrary dimensions. We provide  extensions of the original statement, and apply them  in certain geometric configurations arising in the context of information transport through electromagnetic propagation. 

Consider a wireless network in which a set of users located in one region  of the network wishes to communicate with users located in an another region,  see Figure~\ref{fig:cutsetsp}. The number of channels available for communication is limited by  the dimensionality of the signal's space on the cut-set boundary through which electromagnetic propagation occurs. 
\begin{figure}
\begin{center}
{
\scalebox{.7}{\includegraphics{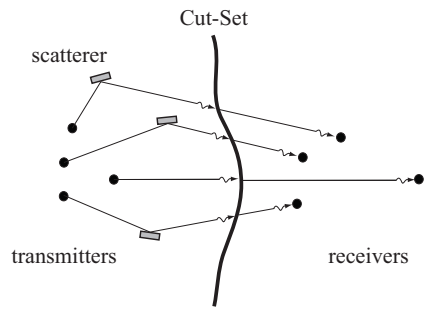}}}
\end{center}
\caption{Parallel channels consume spatial resource along the cut.}
\label{fig:cutsetsp}
\end{figure}
This corresponds to the minimum number of  basis functions required to represent the signal  on the cut-set boundary, and  is independent of the communication scheme employed. It  provides a bound on the amount of spatial  and frequency multiplexing achievable using arbitrary technologies and in arbitrary scattering environments. It essentially corresponds to the \emph{number of degrees of freedom} of the space-time field used for communication and is determined by the phase transition of the eigenvalues referred above.  Similarly, the number of independent features that can be extracted from a scattering system using a  probe signal for remote sensing, imaging, detection and classification systems is  limited by the  number of degrees of freedom of the field and characterized by the same phase transition.

We compute the  number of degrees of freedom of multiple scattered electromagnetic waves  in terms of Kolmogorov's $n$-width, and  provide a physical interpretation in terms of cut-set information flow. Our results extend the single frequency treatments of ~\cite{bucci1,bucci2,Miller1,Miller2,Poon,Kennedy,migliore1,migliore2} to signals of non-zero frequency bandwidth. 
In a broader framework, they rigorously address the question of how much information does an electromagnetic waveform carry in time and space, first posed in  the early works of  Toraldo di Francia~\cite{Toraldo1, Toraldo2} and Gabor~\cite{Gabor1,Gabor2}. 

The number of degrees of freedom of the field is also related to the information capacity of multi-user communication systems. Its application  to bound the Shannon capacity scaling of wireless networks is described in~\cite{migliore1} and \cite{migliore2}, and can be extended to signals with a non-zero frequency band using the results given here.   


The  rest of the paper is organized as follows. The next section provides rigorous statements of the mathematical results.
Applications are discussed in Section~\ref{sec:application}. 
Section~\ref{sec:conc} draws conclusions and mentions some possible future work. The proof of the main theorem
is given in Appendix~\ref{sec:proof}. 
\section{Statement of the results}
We begin by presenting our results in two dimensions, as this  form better suits the application discussed in the next section. A more general statement appears at the end of this section.
Let $P$ and $Q$ be measurable sets in $\mathbb{R}^2$ with boundaries of measure zero. 
For any  point $ \textbf{p} \in \mathbb{R}^2$, and 
positive scalar $\beta>0$, 
we indicate by $\beta (P+ \textbf{p})$ the set of points of the form $\beta(\fx+ \textbf{p})$ with $\fx\in P$. Clearly, we have
\beq
m(\beta P)= \beta^2 m(P),
\eeq 
where $m(\cdot)$ indicates Lebesgue measure.
Similarly, for $\fx=(x_1,x_2)$, we have
\beq
m\{(\beta x_1,x_2):\fx\in P\} =\beta \,m(P).
\eeq
For any two points $\fx=(x_1, x_2)$, $\fu=(u_1, u_2)$,  we let $\fx \cdot \fu = x_1u_1+  x_2 u_2$. For $f \in L^2(\mathbb{R}^2)$,   the Fourier transform of $f$ is
\beq
\FT f =  \int_{\mathbb{R}^2} f(\fx) \exp(- i \fx \cdot \fu) d \fx.
\eeq

Consider now two subspaces of $L^2(\mathbb{R}^2)$ consisting of the functions supported in $P$ and of those  whose Fourier transform is supported in $Q$, namely
\begin{align}
\mathscr{T}_P &= \{f \in L^2 : f(\fx) = 0, \;  \fx \not \in P\}, \\
\mathscr{B}_Q &= \{f \in L^2 : \FT f (\fu)=0, \; \fu \not \in Q\}.
\end{align}

The  orthogonal projections  onto these subspaces are defined using the indicator function 
\begin{equation}
\mathds{1}_Q(\fx) =  \left\{ \begin{array}{ll}
1 & \mbox{if $\fx \in Q$},\\
0 & \mbox{otherwise},\end{array} \right. 
\end{equation}
in the following way
\begin{align}
\prD_P f &= \mathds{1}_P(\fx) f(\fx), \label{eq:rect} \\
\prB_Q f &= \FT^{-1} \mathds{1}_Q \FT f =  \int_{\mathbb{R}^2} h(\fx-\fy) f(\fy) d \fy, \label{eq:sinc}
\end{align}
where 
\beq 
\FT h = \mathds{1}_Q.
\eeq
We are interested in the behavior of the eigenvalues in the integral equation 
\beq
\prD_P \prB_Q \prD_P \varphi(\fx) = \lambda \varphi(\fx),
\label{fred}
\eeq
where the operator  $\prD_P \prB_Q \prD_P$ is positive, self-adjoint, compact, and bounded by one. The eigenvalues are a countable set that characterizes the \emph{asymptotic dimension} of the subspace $\mathscr{B}_Q$. Namely, the number of eigenvalues above level $\epsilon$ corresponds to the dimension of the minimal subspace approximating the elements of $\mathscr{B}_Q$ within $\epsilon$ accuracy over the set $P$. A rigorous definition of asymptotic dimension is given in Section~\ref{sec:application}. To determine this number, Landau~\cite{szego}
considered the case where $Q$ is a fixed set, while $P$ varies over the family $\beta P'$, with $P'$ fixed, and determined  the number of eigenvalues significantly greater than zero as $\beta \rightarrow \infty$. It turns out that the eigenvalues arranged in non-increasing order undergo a transition from values close to one to values close to zero in an interval of indexes around $\beta^2$. The transition  is sharp, of width $o(\beta^2)$, so that the eigenvalues plotted as a function of their indexes appear as a step function  when viewed at the scale of $\beta^2$, see Figure~\ref{fig:eigtran}.  
\begin{figure}
\begin{center}
{
\scalebox{.8}{\includegraphics{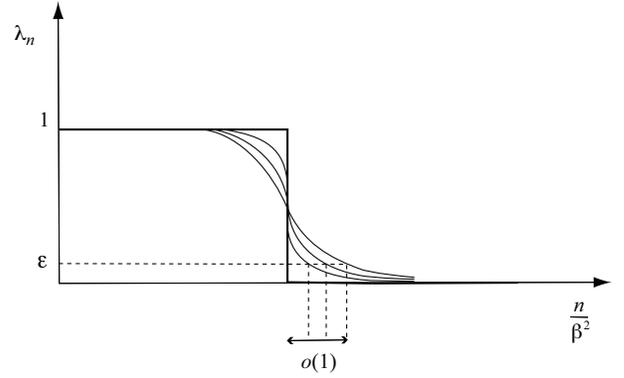}}}
\end{center}
\caption{Phase transition of the eigenvalues.}
\label{fig:eigtran}
\end{figure}
The asymptotic dimension of the space $\mathscr{B}_Q$ is given by the  transition point  of this step function, that  is of the order of $\beta^2$, when the support set $P$ is appropriately scaled by blowing-up all of its coordinates. Due to symmetry,  the asymptotic dimension of the space $\mathscr{T}_P$ can similarly be obtained by considering the operator $\prB_Q \prD_P \prB_Q$ and scaling $Q=\beta Q'$ while keeping $P$ fixed. 
The precise result is stated as follows: 
\newtheorem{thm}{Theorem}
\begin{thm}
\label{th:landau}
(Landau). For any  $0<\epsilon<1$,  let $N_\beta(\epsilon)$ be the number of eigenvalues of $\prD_{\beta P'} \prB_Q \prD_{\beta P'}$ not smaller than $\epsilon$, then we have   
\beq
\lim_{\beta \rightarrow \infty} \frac{N_\beta(\epsilon)}{\beta^2} = (2 \pi)^{-2} m(P') m(Q).
\eeq
\end{thm}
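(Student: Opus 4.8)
The plan is to establish a Szeg\H{o}-type limit law for the spectrum by comparing the first two spectral moments of the operator. Write $A_\beta = \prD_{\beta P'}\prB_Q\prD_{\beta P'}$ and let $\lambda_1(\beta)\ge\lambda_2(\beta)\ge\cdots$ be its eigenvalues. Since $\prB_Q$ is an integral operator with kernel $h(\fx-\fy)$ and $\prD_{\beta P'}$ is multiplication by $\mathds{1}_{\beta P'}$, the operator $A_\beta$ has kernel $K_\beta(\fx,\fy) = \mathds{1}_{\beta P'}(\fx)\,h(\fx-\fy)\,\mathds{1}_{\beta P'}(\fy)$. Because $m(Q)<\infty$ we have $h\in L^2$, so $A_\beta = (\prD_{\beta P'}\prB_Q)(\prD_{\beta P'}\prB_Q)^*$ is trace class and its moments can be read off from the kernel. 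Writing $c = (2\pi)^{-2}m(P')m(Q)$, the goal is to show $N_\beta(\epsilon)/\beta^2\to c$ for every $\epsilon\in(0,1)$.

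First I would compute the two moments directly. The trace gives
\beq
\sum_n\lambda_n(\beta) = \mathrm{tr}\,A_\beta = \int_{\beta P'}h(0)\,d\fx = m(\beta P')\,h(0) = \beta^2 c,
\eeq
where $h(0)=(2\pi)^{-2}m(Q)$. By Parseval the same quantity equals $m(\beta P')\|h\|_2^2$, and this coincidence $h(0)=\|h\|_2^2$ is the engine of the proof. For the second moment, self-adjointness gives $\mathrm{tr}\,A_\beta^2 = \|K_\beta\|_2^2 = \int_{\beta P'}\int_{\beta P'}|h(\fx-\fy)|^2\,d\fx\,d\fy$. Subtracting, and changing variables $\fz=\fx-\fy$, I obtain
\beq
\sum_n\lambda_n(\beta)\big(1-\lambda_n(\beta)\big) = \int_{\mathbb{R}^2}|h(\fz)|^2\Big[m(\beta P') - m\big(\beta P'\cap(\beta P'+\fz)\big)\Big]d\fz.
\eeq

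The heart of the argument, and the step I expect to be the main obstacle, is to show that this nonnegative quantity is $o(\beta^2)$. Rescaling the inner bracket by $\beta^2$ turns the right-hand side into $\beta^2\int_{\mathbb{R}^2}|h(\fz)|^2\big[m(P')-m(P'\cap(P'+\fz/\beta))\big]d\fz$. For each fixed $\fz$ the bracket tends to $0$ as $\beta\to\infty$, since $P'$ has a boundary of measure zero and translation is continuous in $L^1$, so $m(P'\cap(P'+\mathbf{t}))\to m(P')$ as $\mathbf{t}\to\mathbf{0}$. The bracket is also dominated by $m(P')$, and $|h|^2$ is integrable, so dominated convergence yields $\sum_n\lambda_n(1-\lambda_n)=o(\beta^2)$. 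The delicate point is precisely that $P'$ and $Q$ are only assumed measurable with null boundary: for sets of finite perimeter one would get the cleaner bound $O(\beta)$, but in the general case the convergence must be extracted from the continuity of translations rather than from any smoothness of the boundary.

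Finally I would convert the moment estimate into the eigenvalue count. Since each eigenvalue in $(\delta,1-\delta)$ contributes at least $\delta(1-\delta)$ to $\sum_n\lambda_n(1-\lambda_n)=o(\beta^2)$, the number of such eigenvalues is $o(\beta^2)$, which is the sharpness of the transition. Likewise, for $\delta\le 1/2$ one has $\lambda_n\le 2\lambda_n(1-\lambda_n)$ whenever $\lambda_n\le\delta$, so $\sum_{\lambda_n\le\delta}\lambda_n=o(\beta^2)$. Writing $M_\beta(\delta)=\#\{n:\lambda_n(\beta)\ge 1-\delta\}$ and using $\sum_n\lambda_n=\beta^2 c$ together with $(1-\delta)M_\beta(\delta)\le\sum_{\lambda_n\ge1-\delta}\lambda_n\le M_\beta(\delta)$, a two-sided squeeze gives $c\le\liminf M_\beta(\delta)/\beta^2\le\limsup M_\beta(\delta)/\beta^2\le c/(1-\delta)$. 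Because $N_\beta(\epsilon)=M_\beta(\delta)+o(\beta^2)$ for any $\delta<\epsilon<1-\delta$, the same bounds hold for $N_\beta(\epsilon)/\beta^2$, and letting $\delta\to 0$ yields the claimed limit $(2\pi)^{-2}m(P')m(Q)$.
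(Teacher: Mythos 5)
Your proof is correct and follows essentially the same route as the paper: the paper states this result as Landau's theorem and proves its generalization (Theorem~3) in the appendix by exactly your two-moment argument --- Lemma~1 computes $\sum_k\lambda_k$, Lemma~2 shows $\sum_k\lambda_k^2$ has the same leading order via a change of variables and dominated convergence, and the eigenvalue count then follows from the fact that all eigenvalues must cluster near $0$ or $1$. The only difference is that you spell out the conversion from the moment estimates to the count $N_\beta(\epsilon)$, which the paper dispatches as ``a standard argument identical to the one in Landau.''
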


In Landau's case, spectral concentration is achieved by scaling all coordinates  of one of the two  support sets. We prove the analogous result by scaling only one coordinate of both support sets.  While in Landau's case the scaling of the coordinates is uniform, in our case coordinates are scaled at independent rates.  


Let $P$ and $Q$ be two fixed sets. We consider  the following families 
\beq
P_\tau = \{ (\tau x_1, x_2): (x_1,x_2) \in P\} 
\eeq
\beq
Q_\rho = \{ (x_1, \rho x_2): (x_1,x_2) \in Q\}.
\eeq
\begin{thm}
For any  $0<\epsilon<1$,  let $N_{\tau,\rho}(\epsilon)$ be the number of eigenvalues of $\prD_{P_{\tau}} \prB_{Q_{\rho}} \prD_{P_{\tau}}$ not smaller than $\epsilon$, then we have   
\beq
N_{\tau,\rho}(\epsilon)= (2 \pi)^{-2} m(P) m(Q) \tau \rho +o(\tau \rho),
\label{eq:uno1}
\eeq
as $\tau, \rho \rightarrow \infty$.
\label{th1}
\end{thm}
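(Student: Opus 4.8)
The plan is to go back to the moment method underlying Theorem~\ref{th:landau} rather than to invoke that theorem directly: Landau's statement governs the uniform blow-up $\beta P'$ of a single set, whereas here each of $P$ and $Q$ is dilated along a different single axis and at an \emph{independent} rate, so Theorem~\ref{th:landau} cannot be applied as stated. Write $\lambda_1\ge\lambda_2\ge\cdots$ for the eigenvalues of the positive, trace-class operator $\prD_{P_\tau}\prB_{Q_\rho}\prD_{P_\tau}$, and put $T_1=\sum_n\lambda_n=\operatorname{tr}(\prD_{P_\tau}\prB_{Q_\rho}\prD_{P_\tau})$ and $T_2=\sum_n\lambda_n^2=\operatorname{tr}\big((\prD_{P_\tau}\prB_{Q_\rho}\prD_{P_\tau})^2\big)$. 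Since $0\le\lambda_n\le 1$, the elementary clustering inequality
\begin{equation}
\big|N_{\tau,\rho}(\epsilon)-T_1\big|\;\le\;\max\!\Big(\tfrac1\epsilon,\tfrac1{1-\epsilon}\Big)\sum_n\lambda_n(1-\lambda_n)\;=\;\max\!\Big(\tfrac1\epsilon,\tfrac1{1-\epsilon}\Big)\,(T_1-T_2)
\end{equation}
reduces the theorem to two facts: that $T_1$ equals the claimed leading term exactly, and that $T_1-T_2=o(\tau\rho)$.

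The first fact is a direct kernel computation. The operator has kernel $\mathds 1_{P_\tau}(\fx)\,h_\rho(\fx-\fy)\,\mathds 1_{P_\tau}(\fy)$, where $\FT h_\rho=\mathds 1_{Q_\rho}$, so $h_\rho(\mathbf 0)=(2\pi)^{-2}m(Q_\rho)$ and, using $m(P_\tau)=\tau\,m(P)$, $m(Q_\rho)=\rho\,m(Q)$,
\begin{equation}
T_1=m(P_\tau)\,h_\rho(\mathbf 0)=(2\pi)^{-2}\,m(P)\,m(Q)\,\tau\rho .
\end{equation}
For $T_2$ I would use that the operator is Hilbert--Schmidt, so $T_2=\int_{P_\tau}\!\int_{P_\tau}|h_\rho(\fx-\fy)|^2\,d\fx\,d\fy=\int_{\mathbb R^2}|h_\rho(\fz)|^2\,\gamma_{P_\tau}(\fz)\,d\fz$, where $\gamma_{P_\tau}(\fz)=m\big(P_\tau\cap(P_\tau-\fz)\big)$ is the autocorrelation of $\mathds 1_{P_\tau}$. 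Since Parseval's identity gives $\int_{\mathbb R^2}|h_\rho|^2=h_\rho(\mathbf 0)$ while $\gamma_{P_\tau}(\mathbf 0)=m(P_\tau)$, subtracting produces the clean representation
\begin{equation}
T_1-T_2=\int_{\mathbb R^2}|h_\rho(\fz)|^2\big[\,m(P_\tau)-\gamma_{P_\tau}(\fz)\,\big]\,d\fz\;\ge\;0 .
\end{equation}

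The crux, and the step I expect to be the main obstacle, is showing this nonnegative quantity is $o(\tau\rho)$ in the presence of the two independent dilations. The natural device is to undo both scalings at once. Writing $h$ for the kernel of the fixed set $Q$ (so $\FT h=\mathds 1_Q$) and $\gamma_P$ for the autocorrelation of the fixed $\mathds 1_P$, the dilations give $h_\rho(\fz)=\rho\,h(z_1,\rho z_2)$ and $\gamma_{P_\tau}(\fz)=\tau\,\gamma_P(z_1/\tau,z_2)$; substituting and changing variables $\fz\mapsto(z_1,z_2/\rho)$ collapses every prefactor into a single $\tau\rho$:
\begin{equation}
\frac{T_1-T_2}{\tau\rho}=\int_{\mathbb R^2}|h(\fu)|^2\Big[\,m(P)-\gamma_P\big(u_1/\tau,\;u_2/\rho\big)\Big]\,d\fu .
\end{equation}
The argument of $\gamma_P$ tends to $\mathbf 0$ for each fixed $\fu$ as $\tau,\rho\to\infty$; because $\mathds 1_P\in L^2$, translation is $L^2$-continuous and hence $\gamma_P$ is continuous at the origin with $\gamma_P(\mathbf 0)=m(P)$, so the bracket vanishes pointwise. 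The integrand is dominated by $m(P)\,|h(\fu)|^2$, which is integrable since $\int|h|^2=(2\pi)^{-2}m(Q)<\infty$, so dominated convergence yields $(T_1-T_2)/(\tau\rho)\to 0$. Feeding $T_1$ and this estimate into the clustering inequality gives $N_{\tau,\rho}(\epsilon)=(2\pi)^{-2}m(P)m(Q)\tau\rho+o(\tau\rho)$. The only remaining points needing care are the routine verifications that the operator is trace class and Hilbert--Schmidt (both from $m(P),m(Q)<\infty$) and the observation that mere continuity of $\gamma_P$ at the origin, rather than any boundary regularity of $P$, is all the dominated-convergence step requires.
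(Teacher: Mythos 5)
Your proposal is correct, and its skeleton is exactly the paper's: compute the first two spectral moments $T_1=\sum\lambda_k$ and $T_2=\sum\lambda_k^2$ (the paper's Lemmas~1 and~2, proved for the general operator $\prD_{\mtx{A}P}\prB_{\mtx{B}Q}\prD_{\mtx{A}P}$ of Theorem~\ref{th:tre}, of which Theorem~\ref{th1} is the special case $\mtx{A}=\mathrm{diag}(\tau,1)$, $\mtx{B}=\mathrm{diag}(1,\rho)$), then conclude by eigenvalue clustering. Two points of difference are worth recording. First, you make the clustering step explicit via the inequality $|N_{\tau,\rho}(\epsilon)-T_1|\le\max(\epsilon^{-1},(1-\epsilon)^{-1})(T_1-T_2)$, whereas the paper only gestures at ``a standard argument identical to the one in Landau''; your inequality is the correct quantitative form of that argument. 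Second, in the $o(\tau\rho)$ estimate for $T_1-T_2$ you apply Fubini in the opposite order: you integrate out the $P$-variables first, producing the autocorrelation $\gamma_P(\fz)=m(P\cap(P-\fz))$, and then run dominated convergence in the difference variable $\fu$, using only the $L^2$-continuity of translation to get $\gamma_P(\fu_1/\tau,\fu_2/\rho)\to\gamma_P(\mathbf 0)=m(P)$. The paper instead fixes $\fp\in P$, shows $F_{\mtx{A},\mtx{B}}(\fp)=\int_{\mtx{B^T}\mtx{A}(\fp-P)}|h(\fv)|^2\,d\fv\to(2\pi)^{-N}m(Q)$ for interior points $\fp$, and applies dominated convergence over $P$ --- a step that needs $\fp-P$ to contain a ball, hence the hypothesis that $\partial P$ has measure zero (with a footnoted approximation argument otherwise). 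Your rearrangement sidesteps that boundary-regularity issue entirely, which is a small but genuine gain in generality for this step; otherwise the two proofs are the same computation.
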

The proof is based on decomposing the operator using orthonormal functions, and then computing an integral that  differs from  the one of Landau, due to the different scaling of the space.
The result also extends to higher dimensions, where any combination of coordinates' scaling in the original or transformed domain can be performed. Our theorem can be stated for any invertible linear mapping of the support sets, subject to a limiting condition.  
Let $P$ and $Q$ be measurable sets in $\mathbb{R}^N$ with boundaries of measure zero. For any real  matrix $\mtx{A}$ of size $N\times N$, we indicate by $\mtx{A}P$ the set of points of the form $\mtx{A} \mtx{x}$, where $\mtx{x}$ is the column vector composed of the elements of $\fx\in P$. We also indicate  with $|\mtx{A}|$  the determinant of $\mtx{A}$, with $\mtx{A^T}$ the transpose of $\mtx{A}$, and with with $U$   the unit ball in $L^2(\mathbb{R}^N).$ We let $\mtx{A}=\mtx{A}(\tau)$ and $\mtx{B}=\mtx{B}(\rho)$ for real parameters $\tau$ and $\rho$. The case in which either matrix is constant, or depends on multiple parameters is completely analogous.
\begin{thm}
For any  $0<\epsilon<1$,  let $N_{\mtx{A},\mtx{B}}(\epsilon)$ be the number of eigenvalues of $\prD_{\mtx{A}P} \prB_{\mtx{B}Q} \prD_{\mtx{A}P}$ not smaller than $\epsilon$.
If 
\beq
\lim_{(\tau, \rho) \rightarrow (\tau_0,\rho_0)} \mtx{B^T}\mtx{A} U = \mathbb{R}^N,
\eeq
then
we have   
\beq
\lim_{(\tau, \rho) \rightarrow (\tau_0,\rho_0)} \frac{N_{\mtx{A},\mtx{B}}(\epsilon)}{|\mtx{A}|\,|\mtx{B}|} = (2 \pi)^{-N} m(P) m(Q).
\eeq
\label{th:tre}
\end{thm}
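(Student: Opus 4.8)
The plan is to eliminate one of the two matrices by a unitary change of variables, reducing to an operator with a single parameter matrix, and then to run the classical Weyl-type trace argument, the only new feature being that the support set now blows up anisotropically.

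\emph{Reduction by dilation.} First I would introduce, for invertible $\mtx{M}$, the $L^2(\mathbb{R}^N)$ dilation $(\Op_{\mtx{M}}f)(\fx)=|\mtx{M}|^{1/2}f(\mtx{M}\fx)$, which is unitary. With the Fourier convention of the paper a direct computation gives the intertwining relations $\Op_{\mtx{M}}^{-1}\prD_P\,\Op_{\mtx{M}}=\prD_{\mtx{M}P}$ and $\Op_{\mtx{M}}^{-1}\prB_Q\,\Op_{\mtx{M}}=\prB_{\mtx{M}^{-T}Q}$. Conjugating the operator of interest by $\Op_{\mtx{A}}$ and writing $\mtx{C}=\mtx{A}^{T}\mtx{B}$ then yields
\beq
\Op_{\mtx{A}}\bigl(\prD_{\mtx{A}P}\prB_{\mtx{B}Q}\prD_{\mtx{A}P}\bigr)\Op_{\mtx{A}}^{-1}=\prD_{P}\,\prB_{\mtx{C}Q}\,\prD_{P}.
\eeq
Since unitary conjugation preserves the spectrum with multiplicities, $N_{\mtx{A},\mtx{B}}(\epsilon)$ equals the number of eigenvalues $\ge\epsilon$ of $K:=\prD_P\prB_{\mtx{C}Q}\prD_P$, in which $P$ and $Q$ are fixed and all parameter dependence lives in $\mtx{C}$. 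Since $\mtx{B}^{T}\mtx{A}=\mtx{C}^{T}$, the hypothesis $\mtx{B}^{T}\mtx{A}\,U\to\mathbb{R}^N$ says exactly that every singular value of $\mtx{C}$ diverges, equivalently $\mtx{C}^{-T}\to\mathbf{0}$; in particular $|\mtx{C}|=|\mtx{A}|\,|\mtx{B}|\to\infty$.

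\emph{Trace setup.} The eigenvalues $\lambda_i$ of $K$ lie in $[0,1]$. Using that $\prB_{\mtx{C}Q}$ has convolution kernel $h$ with $\FT h=\mathds{1}_{\mtx{C}Q}$, the trace of this product of projections is $\mathrm{tr}\,K=(2\pi)^{-N}m(P)\,m(\mtx{C}Q)=(2\pi)^{-N}|\mtx{A}|\,|\mtx{B}|\,m(P)\,m(Q)$, which is exactly the claimed count; and $\mathrm{tr}\,K-\mathrm{tr}\,K^2=\iint_{P\times P^{c}}|h(\fx-\fy)|^2\,d\fx\,d\fy$. The elementary inequality $|\mathds{1}[\lambda\ge\epsilon]-\lambda|\le(\min(\epsilon,1-\epsilon))^{-1}\lambda(1-\lambda)$ on $[0,1]$, summed over $i$, gives $|N_{\mtx{A},\mtx{B}}(\epsilon)-\mathrm{tr}\,K|\le(\min(\epsilon,1-\epsilon))^{-1}(\mathrm{tr}\,K-\mathrm{tr}\,K^2)$. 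Hence it suffices to prove that $\mathrm{tr}\,K-\mathrm{tr}\,K^2=o(\mathrm{tr}\,K)$, and this is where the limiting hypothesis must be used.

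\emph{The anisotropic estimate.} Integrating out one variable rewrites the defect as $\mathrm{tr}\,K-\mathrm{tr}\,K^2=\int_{\mathbb{R}^N}|h(\fz)|^2\,\Phi(\fz)\,d\fz$, where $\Phi(\fz)=m(P)-m\bigl(P\cap(P+\fz)\bigr)$ is the boundary-overlap of $P$; since $\partial P$ has measure zero, $\Phi$ is continuous with $\Phi(\mathbf{0})=0$ and $0\le\Phi\le m(P)$. The scaling identity $h(\fz)=|\mtx{C}|\,h_Q(\mtx{C}^{T}\fz)$, where $\FT h_Q=\mathds{1}_Q$, together with the substitution $\fv=\mtx{C}^{T}\fz$, converts the ratio into
\beq
\frac{\mathrm{tr}\,K-\mathrm{tr}\,K^2}{\mathrm{tr}\,K}=\frac{\displaystyle\int_{\mathbb{R}^N}|h_Q(\fv)|^2\,\Phi(\mtx{C}^{-T}\fv)\,d\fv}{(2\pi)^{-N}m(P)\,m(Q)}.
\eeq
Because $\mtx{C}^{-T}\to\mathbf{0}$ we have $\Phi(\mtx{C}^{-T}\fv)\to\Phi(\mathbf{0})=0$ for each fixed $\fv$, while $|h_Q(\fv)|^2\Phi(\mtx{C}^{-T}\fv)\le m(P)|h_Q(\fv)|^2$ with $\int|h_Q|^2=(2\pi)^{-N}m(Q)<\infty$. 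Dominated convergence drives the numerator to $0$, giving $N_{\mtx{A},\mtx{B}}(\epsilon)/(|\mtx{A}||\mtx{B}|)\to(2\pi)^{-N}m(P)m(Q)$. Taking $\mtx{C}=\beta\,\mathrm{Id}$ recovers Theorem~\ref{th:landau} and $\mtx{C}=\mathrm{diag}(\tau,\rho)$ recovers Theorem~\ref{th1}.

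\emph{Main obstacle.} The genuinely delicate point is the last estimate under anisotropic growth. A crude sandwiching of $\mtx{C}Q$ between two balls would destroy the sharp constant, because the singular values of $\mtx{C}$ diverge at independent rates; the correct device is the substitution $\fv=\mtx{C}^{T}\fz$, which exactly cancels the $|\mtx{C}|$ prefactors and isolates $\Phi(\mtx{C}^{-T}\fv)$. The hypothesis $\mtx{B}^{T}\mtx{A}\,U\to\mathbb{R}^N$ is then seen to be precisely what is required: were even one singular value of $\mtx{C}$ to stay bounded, $\mtx{C}^{-T}\fv$ would not tend to $\mathbf{0}$, the boundary-overlap would persist in the limit, and the eigenvalue transition would fail to be sharp.
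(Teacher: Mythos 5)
Your proof is correct and follows essentially the same route as the paper: the count is reduced to showing that $\sum_k\lambda_k$ and $\sum_k\lambda_k^2$ agree to first order, both evaluated through the kernel, the scaling identity $h_{\mtx{B}}(\fu)=|\mtx{B}|\,h(\mtx{B^T}\fu)$, a change of variables, and dominated convergence driven by the hypothesis $\mtx{B^T}\mtx{A}\,U\to\mathbb{R}^N$. The only differences are presentational: you first conjugate by a unitary dilation to merge $\mtx{A}$ and $\mtx{B}$ into the single matrix $\mtx{C}=\mtx{A}^T\mtx{B}$ (where the paper keeps both matrices and performs two substitutions inside the integral), and you make explicit, via the inequality $|\mathds{1}[\lambda\ge\epsilon]-\lambda|\le(\min(\epsilon,1-\epsilon))^{-1}\lambda(1-\lambda)$, the standard first-moment/second-moment step that the paper delegates to Landau's original argument.
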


\newtheorem{rmk}{Remark}
Landau's theorem is a special case where $\mtx{A}$ is a scalar matrix and $\mtx{B}$ is the identity. Its  two-dimensional version is stated in Theorem~\ref{th:landau}. Theorem 
\ref{th1} corresponds to the special case
\beq
\mtx{A}= \left( \begin{array}{cc}
\tau & 0 \\
0 & 1  
\end{array} \right) \; \;
\mtx{B}= \left( \begin{array}{cc}
1 & 0 \\
0 & \rho  
\end{array} \right).
\eeq

\section{Application of the results}
\label{sec:application}
Landau~\cite{szego} does not mention the extensions  to his theorem described above, but no doubt that if asked he could have proved them effortlessly. Nonetheless, these results do not seem to appear anywhere. We believe that our main contribution is to point out their significance  in the context of communication and sensing using electromagnetic waves.

\subsection{The number of degrees of freedom}
When communication occurs through propagation of electromagnetic waves, the effective dimensionality, or \emph{number of degrees of freedom}, of the space-time field is a key information-theoretic quantity related to the capacity of any spatially distributed communication system~\cite{migliore1,migliore2}. This quantity can be computed by evaluating the number of significant eigenvalues in Theorems~\ref{th:landau}, \ref{th1}, or \ref{th:tre}.

Consider the space $\mathscr{S}$ of real space-time waveforms $f(s,t) \in L^2(\mathbb{R}^2)$ and equipped with the norm
\beq
||f||^2 = \left( \iint_{P} f^2(s,t)ds dt \right)^{1/2}. 
\eeq
A subspace $\mathscr{B}_Q \subset \mathscr{S}$ is given by the space of functions $f(s,t) \in L^2(\mathbb{R}^2)$ of  spectral support $Q$. We assume their energy is normalized so that
\beq
\iint_{\mathbb{R}^2} f^2(s,t) ds dt \leq1.
\label{enbound}
\eeq

Given a level of accuracy $\epsilon>0$,  define the number of
degrees of freedom at level $\epsilon$ of the space $\mathscr{B}_Q$ in $\mathscr{S}$
\beq
\mathcal{N}_{\epsilon}(\mathscr{B}_Q) = \min\{n : d_n(\mathscr{B}_Q, \mathscr{S}) \leq \epsilon\},
\eeq
where $d_n(\mathscr{B}_Q, \mathscr{S})$ is the Kolmogorov $n$-width~\cite{Pinkus} of the space $\mathscr{B}_Q$ in $\mathscr{S}$.
Letting  $\mathscr{S}_n$ be an $n$-dimensional subspace of $\mathscr{S}$, this is defined as
\beq
d_n(\mathscr{B}_Q, \mathscr{S}) = \inf_{\mathscr{S}_n \subset \mathscr{S}} D_{\mathscr{S}_n}(\mathscr{B}_Q),
\eeq
where the deviation
\beq
D_{\mathscr{S}_n}(\mathscr{B}_Q) =\sup_{f \in \mathscr{B}_Q} \inf_{g \in \mathscr{S}_n}||f-g||.
\eeq
In words, the deviation represents how well $\mathscr{B}_Q$ may be uniformly approximated by the elements of an $n$-dimensional subspace of $\mathscr{S}$, while the $n$-width is the smallest of such deviations over all $n$-dimensional subspaces of $\mathscr{S}$.
The number of degrees of freedom
$N_\epsilon(\mathscr{B}_Q)$ represents the dimension of the minimal subspace representing the elements of $\mathscr{B}_Q$ within $\epsilon$ accuracy over the set $P$. A basic result in approximation theory (see e.g.\ \cite[Ch. 2,  Prop. 2.8]{Pinkus}) states that
\beq
d_n(\mathscr{B}_Q, \mathscr{S}) = \sqrt{\lambda_n},
\label{basic}
\eeq
where $\lambda_n$ is the $n$-th eigenvalue (arranged in non-increasing order) of  the Fredholm integral equation of the second kind in (\ref{fred}). 
It now follows from~(\ref{basic}) that Theorem~\ref{th1}  allows to compute the number of degrees of freedom by scaling only one of the two coordinates of the space-time field, together with the  transformed version of the other. 

The number of degrees of freedom defined in this way appears to be a principal feature of the mathematical model of the real world of transmitted signals, that is practically insensitive to small changes of a secondary feature of the model, such as the accuracy $\epsilon$ of the measurement apparatus with which the signals are detected. This is evident by rewriting (\ref{eq:uno1})  as
\beq
N_{\tau,\rho}(\epsilon) = (2 \pi)^{-2} m(P) m(Q) \, \tau \rho + o(\tau \rho) \;\;\; \mbox{as } \tau,\rho \rightarrow \infty,
\eeq
where the $\epsilon$-dependence  appears  hidden as a pre-constant of the second order term  $ o(\tau \rho)$  in the phase transition of the eigenvalues.  

\subsection{Cut-set of two-dimensional circular domains}
Theorems~\ref{th:landau} and \ref{th1} can be applied to  evaluate the number of degrees of freedom of waveforms of a given frequency band over spatial cut-sets separating transmitters and receivers in a wireless communication setting, or separating scattering objects from sensing devices in imaging and sensing systems. 

Consider the case  of a two-dimensional domain of  cylindrical symmetry, 
\begin{figure}
\begin{center}
{
\scalebox{.95}{\includegraphics{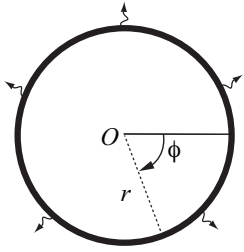}}}
\end{center}
\caption{Cut-set boundary of two-dimensional circular  domain.}
\label{fig:radiator}
\end{figure}
in which an electromagnetic field is radiated by a configuration of currents located inside a circular domain of radius $r$ of an arbitrary scattering environment, oriented perpendicular to the domain, and constant along the direction of flow. This can model, for example, an arbitrary scattering environment where a spatial distribution of wireless transmitters is placed inside the circular domain,  and
communicate to receivers placed outside the domain. The same model applies to a remote sensing system where  objects  inside the domain are illuminated by an external waveform, and the scattered field is recovered by sensors placed outside the domain. The radiated field away from the scattering system and measured at the receivers is completely determined by the field on the cut-set boundary   through which it propagates, see Figure~\ref{fig:radiator}.  On this boundary, we can refer to a  scalar field  $f(\phi,t)$ that is a function  of only two scalar variables: one angular and one temporal. The corresponding four field's representations, linked by Fourier transforms, are depicted in Figure~\ref{fig:repr1}, where the angular frequency $\omega$ indicates the transformed coordinate of the time variable $t$, the wavenumber $w$ indicates the transformed coordinate of the angular variable $\phi$, and $\mathcal{F}f(\phi,t) = \widehat{F}(w,\omega)$.
\begin{figure}
\begin{center}
\begin{picture}(140,100)
\put(0,0){$\widehat{f}(w,t)$}
\put(0,80){$f( \phi,t)$}
\put(100,80){$F( \phi,\omega)$}
\put(100,0){$\widehat{F}( w,\omega)$}
\put(15,40){\vector(0,1){30}}
\put(15,40){\vector(0,-1){30}}
\put(68,0){\vector(1,0){20}}
\put(68,0){\vector(-1,0){30}}
\put(68,80){\vector(1,0){20}}
\put(68,80){\vector(-1,0){30}}
\put(115,40){\vector(0,1){30}}
\put(115,40){\vector(0,-1){30}}
\end{picture}
\end{center}
\caption{Field's representations linked by Fourier transforms.}
\label{fig:repr1}
\end{figure}
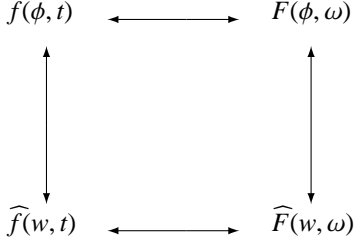

For convenience, we let $r$ be normalized by the speed of light, so that both radius-frequency and time-frequency products are dimensionless quantities.
We wish to determine the  number of degrees of freedom of the space-time field $f(\phi,t)$ on the cut-set boundary,  that is radiated for $t\in[-T/2,T/2]$ and occupies a bandwidth $\omega \in [-\Omega,\Omega]$; it is observed over the interval $\phi \in [-\pi,\pi]$ and occupies a  wavenumber bandwidth $w\in[-W,W]$. Of course, these statements make little mathematical sense since a field satisfying above constraints would have bounded support in both the natural and transformed domain, so it would be zero everywhere. To make our considerations precise, we need to take appropriate scaling limits.

To determine the correct scaling laws for the support sets of the field, we introduce some geometric constraints. On the one hand, we limit the observation domain  to $\phi \in [-\pi,\pi]$, so that the spatially periodic field on the cut-set boundary is observed over a single period. On the other hand, 
it is well known~\cite{bucci1}  that in this case the wavenumber bandwidth is related to the frequency of transmission in such a way that for any possible configuration of sources and scatterers inside the circular radiating domain, we have at most
\beq
 w =  \omega r +o(\omega r) \;\;\; \mbox{as } \omega r \rightarrow \infty.
\label{circular}
\eeq
It follows that we need to scale the support sets of the field  subject to the constraints  depicted in Figure~\ref{fig:PQ}.
\begin{figure}
\begin{center}
{
\scalebox{.7}{\includegraphics{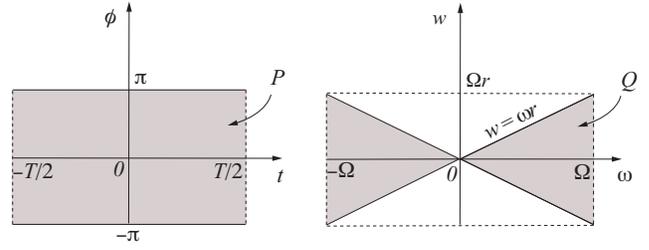}}}
\end{center}
\caption{Support sets in the natural and transformed domains.}
\label{fig:PQ}
\end{figure}
We can blow-up the spectral support $Q$ while keeping $P$ fixed by letting $\Omega \rightarrow \infty$. In this case, by Theorem~\ref{th:landau}   the number of degrees of freedom  over a fixed transmission time  and cut-set  interval of width $2 \pi$, in the wide band frequency regime is
\begin{align}
\mathcal{N}_\epsilon(\mathscr{B}_Q) &= \frac{2 \Omega^2 r T 2 \pi}{(2\pi)^2} + o(\Omega^2) \nonumber \\
&= \frac{\Omega T}{\pi}  \frac{2 \pi r \Omega}{2\pi} +o(\Omega^2)  \;\;\; \mbox{as } \Omega \rightarrow \infty.
\label{eq:uno}
\end{align}
On the other hand, our geometric configuration does not allow to blow-up the support $P$ while keeping $Q$ fixed because the cut-set domain is limited to an angle $2\pi$. Landau's theorem in its original form is of no help here, but we can apply Theorem~\ref{th1}  to obtain the number of degrees of freedom over a fixed frequency band, by scaling the time coordinate of $P$ and  the coordinate of $Q$ corresponding to the size of the radiating system, and we have
\begin{align}
\mathcal{N}_\epsilon(\mathscr{B}_Q) &= \frac{\Omega T}{\pi}  \frac{2 \pi r \Omega}{2\pi} + o(rT)  \;\;\; \mbox{as } T,r \rightarrow \infty .
\label{eq:due}
\end{align}

Equations (\ref{eq:uno}) and (\ref{eq:due}) show that the number of degrees of freedom is given by the product of two factors, each viewed in an appropriate asymptotic regime: one accounting for the number of degrees of freedom in the time-frequency domain, $\Omega T/\pi$, and  another accounting for the number of degrees of freedom in the space-wavenumber domain, $2 \pi r \Omega  /(2 \pi)$. The latter factor physically corresponds to the perimeter of the disc of radius $r$ normalized by an interval of wavelengths $2 \pi/\Omega$,  and can be interpreted as the spatial cut-set through which the information must flow. 
The idea is that for any finite size system, the wavenumber bandwidth is a limited resource. Each parallel channel occupies a certain amount of spatial resource on the cut, proportional to the wavelength of transmission, and these channels must be sufficiently spaced along the cut for the corresponding waveforms to provide independent streams of information. The total number of channels  is then given by the total spatial resource, given by the cut-length $2 \pi r$, divided by the total occupation cost, given by the wavelength-interval $2 \pi/\Omega$.


\subsection{Comparison with the single-dimensional case}
\label{sec:comparison}
The number of time-frequency degrees of freedom of any  signal $f(t) \in L^2(\mathbb{R})$  bandlimited to $[-\Omega,\Omega]$, or  timelimited to $[-T/2,T/2]$,   is given by the time-bandwidth product
\beq
\mathcal{N}_\epsilon(\mathscr{B}_{\Omega}) = \frac{\Omega T}{\pi} +o(\Omega T)\;\;\; \mbox{as } \Omega T \rightarrow \infty.
\label{comb1}
\eeq
In this case, spectral concentration occurs by scaling either the transmission time $T$, or the frequency band $\Omega$. The work in~\cite{widom} gives the precise asymptotic order of the term $o(\Omega T)$, which is $\log (\Omega T) \log (1/\epsilon)$. 

Similarly, letting $W= \omega r+o(\omega r)$, the number of space-wavenumber degrees of freedom of any  signal $f(\phi) \in L^2[-\pi,\pi]$ radiated with frequency $\omega$ from the interior of a circular domain of radius $r$ and observed on the circular perimeter boundary of angle $\phi \in[-\pi,\pi]$,  is given by the space-bandwidth product 
\beq
\mathcal{N}_\epsilon(\mathscr{T}_{2\pi}) = \frac{\omega 2 \pi r}{\pi} +o(\omega r) \;\;\; \mbox{as } r \omega \rightarrow \infty.
\label{comb2}
\eeq
In this case, spectral concentration occurs
by scaling either the size of the radiating system $r$, or the frequency of transmission $\omega$. 

The results (\ref{eq:uno}) and (\ref{eq:due}) can be viewed as a combination of (\ref{comb1}) and (\ref{comb2}).
An heuristic way  to compute the total number of degrees of freedom of the space-time field $f(t,\phi)$ would be to integrate   over frequencies while accounting for the number of spatial degrees of freedom that every frequency component carries. This corresponds to 
simply integrate (\ref{comb2}) over the  frequency bandwidth and, according to (\ref{comb1}), multiply the result by $T/\pi$
\begin{align}
\mathcal{N}_\epsilon(\mathscr{B}_Q) &= \frac{T}{\pi} \int_{0}^{\Omega} \frac{\omega 2 \pi r}{\pi} d \omega, \nonumber \\
&= \frac{\Omega T}{\pi}  \frac{2 \pi r \Omega}{2\pi},
\label{eq:b}
\end{align}
which gives the correct result.  Of course, the problem of this heuristic is clear: it does not account for the possible accumulation of the error $\epsilon$ when computing the integral along the frequency spectrum. Thus, the need for the rigorous method presented in this paper arises.

\subsection{Cut-set of three-dimensional spherical domains}
We extend results to three dimensions by considering a spherical radiating system of radius $r$. In this case, the  surface of the sphere is interpreted as a cut-set  through which the information must flow and provides a limit on the amount of information that can radiate from the interior of the domain to the outside space. Each scalar component $f(\phi_1,\phi_2,t)$ of the vector field on the cut-set boundary is a function of two angular coordinates identifying a point on the surface and one temporal one, see Figure~\ref{fig:sphere}. 
\begin{figure}
\begin{center}
{
\scalebox{.55}{\includegraphics{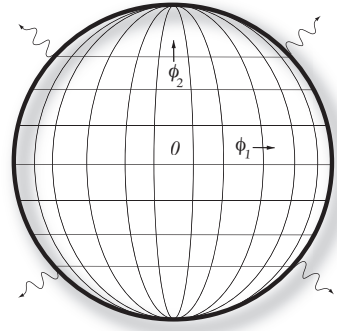}}}
\end{center}
\caption{Cut-set boundary of three-dimensional spherical  domain.} 
\label{fig:sphere}
\end{figure}
The support sets of each scalar field component are the ones depicted in Figure~\ref{fig:PQ1}, where we indicate with $w_1$ the transformed coordinate of the variable $\phi_1$, with $w_2$ the transformed coordinate of the variable $\phi_2$, and with $\Phi$ the solid angle subtended by the surface boundary at the center of the sphere.
%
\begin{figure}
\begin{center}
{
\scalebox{.6}{\includegraphics{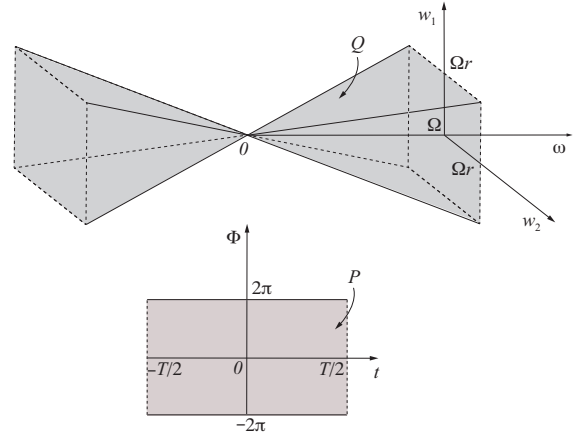}}}
\end{center}
\caption{Support sets in the natural and transformed domains.}
\label{fig:PQ1}
\end{figure}

We then have
\beq
m(P)= 4 \pi T,
\label{sub1}
\eeq
\beq m(Q)= \frac{8}{3}\Omega^3 r^2.
\label{sub2}
\eeq
By Theorem~\ref{th:tre}, with Landau's type scaling
\beq
\mtx{A}= \left( \begin{array}{ccc}
1 & 0 & 0 \\
0 & 1  & 0 \\
0 & 0  &  1
\end{array} \right), \; \;
\mtx{B}= \left( \begin{array}{ccc}
\rho & 0 & 0 \\
0 & \rho & 0 \\
0& 0 & \rho  
\end{array} \right),
\label{landauscale}
\eeq
and using (\ref{sub1}) and (\ref{sub2}),  it follows that the number of degrees of freedom in the wide band frequency regime is
\beq
\mathcal{N}_\epsilon(\mathscr{B}_Q) = 4 \pi r^2 \frac{\Omega^3 T}{3 \pi^3} +o(\Omega^3) \;\;\; \mbox{as } \Omega \rightarrow \infty,
\label{eq:rigor1}
\eeq
where $\Omega = \rho \Omega'$ with $\Omega'$ fixed and $\rho \rightarrow \infty$.
With the alternative scaling
\beq
\mtx{A}= \left( \begin{array}{ccc}
\tau & 0 & 0 \\
0 & 1  & 0 \\
0 & 0  &  1
\end{array} \right), \; \;
\mtx{B}= \left( \begin{array}{ccc}
1 & 0 & 0 \\
0 & \rho & 0 \\
0& 0 & \rho  
\end{array} \right),
\label{alternativescale}
\eeq
we also have that the number of degrees of freedom over a fixed frequency band for large radiating systems and transmission time is
\beq
\mathcal{N}_\epsilon(\mathscr{B}_Q) = 4 \pi r^2 \frac{\Omega^3 T}{3 \pi^3} +o(r^2 T) \;\;\; \mbox{as } T,r \rightarrow \infty, 
\label{eq:rigor2}
\eeq
where $T=\tau T'$, $r= \rho r'$, with $T',r'$ fixed and $\tau, \rho \rightarrow \infty$.

Above rigorous results can also be obtained using the heuristic method described in Section~\ref{sec:comparison}.
The number of space-wavenumber degrees of freedom of the  electromagnetic field radiated with frequency $\omega$ from the interior of a spherical domain of radius $r$ and observed  on the surface boundary of solid angle $\Phi \in[-2\pi,2\pi]$, is given by the space-bandwidth product~\cite{bucci1, bucci2}
\beq
\mathcal{N}_\epsilon(\mathscr{T}_{4\pi})=\frac{\omega^2 4 \pi r^2}{\pi^2} + o[(\omega r)^2] \;\;\; \mbox{as } r \omega \rightarrow \infty.
\label{eq:narrow2}
\eeq
Integrating over the frequency bandwidth and multiplying the result by $T/\pi$, we obtain 
\begin{align}
\mathcal{N}_\epsilon(\mathscr{B}_Q) &=  \frac{4 \pi r^2}{\pi^2} \frac{T}{\pi} \int_{0}^{\Omega} \omega^2 d \omega \nonumber \\
& = 4 \pi r^2 \frac{\Omega^3 T}{3 \pi^3},  
\label{h}
\end{align}
which is consistent with  the rigorous results in (\ref{eq:rigor1}) and (\ref{eq:rigor2}).

\subsection{Cut-set of general rotationally symmetric domains}
Results can be further generalized  considering a radiating system enclosed in a convex domain bounded by a surface with rotational symmetry. Consider a  cylindrical coordinate system $(r,\phi, z)$, a closed analytic curve $\zeta=\zeta(r,z)$ lying in the plane $\phi=0$ and symmetric with respect to the $z$ axis, and the surface of revolution obtained by rotating the curve about the $z$ axis, see Figure~\ref{fig:oblate}.  
\begin{figure}
\begin{center}
{
\scalebox{.8}{\includegraphics{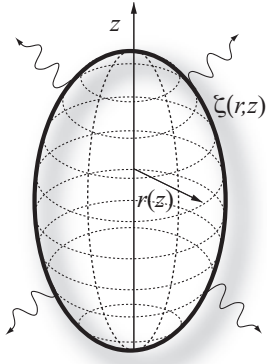}}}
\end{center}
\caption{Cut-set boundary of three-dimensional rotationally symmetric domain.}
\label{fig:oblate}
\end{figure}
In this case, we can choose a pair of  coordinates $(\phi_1,\phi_2)$ on the surface such that any meridian curve covers a $2 \pi$ range, and the spatial bandwidth along the meridian is constant and  bounded by~\cite{savarese} 
\beq
w_1= \omega \frac{\ell}{2 \pi}  +o(\omega \ell) \;\; \mbox{as } \omega \ell \rightarrow \infty,
\eeq
where $\ell$ is the Euclidean length of the curve, normalized to the speed of light so that $\omega \ell$ is dimensionless. This should be compared with the geometric constraint for circular curves given in (\ref{circular}). On the other hand, any latitude line is a circle of radius $r(z)$ that also covers a $2 \pi$ range, and the spatial bandwidth along this line is bounded by
\beq
w_2 = \omega r(z) + o(\omega r), \;\; \mbox{as } \omega r \rightarrow \infty.
\eeq
It follows that while the domain $P$ covers a solid angle $4 \pi$, the domain $Q$ varies along $z$ according to the meridian curve parametrization $\zeta=\zeta(r,z)$, and we have
\beq
m(P)=4 \pi T,
\label{eq:sub11}
\eeq
\begin{align}
m(Q) &= \frac{8 \Omega^3}{3} \ \frac{\ell }{2 \pi} \frac{\pi}{2} \int_\zeta r(z) dz  \nonumber \\
&= \frac{8 \Omega^3}{3}  \frac{\ell }{4 \pi}  \int_\zeta  \pi r(z)  dz  \nonumber \\
&= \frac{8 \Omega^3}{3} \frac{\mathcal{A}}{4 \pi},
\label{eq:sub22}
\end{align}
where $\mathcal{A}$ is the surface area of the radiating volume. 

By Theorem~\ref{th:tre},  the scaling in (\ref{landauscale}), and using (\ref{eq:sub11}) and (\ref{eq:sub22}), we have that the number of degrees of freedom in the wide band frequency regime is
\beq 
\mathcal{N}_\epsilon(\mathscr{B}_Q) = \frac{\mathcal{A}  \Omega^3 T}{3 \pi^3} +o(\Omega^3) \;\;\; \mbox{as } \Omega \rightarrow \infty,
\eeq
where $\Omega = \rho \Omega'$ with $\Omega'$ fixed and $\rho \rightarrow \infty$.
The analogous result is obtained by  letting the transmission time $T= \tau T'$, with $T'$ fixed and $\tau \rightarrow \infty$, and blowing-up all coordinates of the radiating volume, so that both $\ell = \rho \ell'$ and $r= \rho r'(z)$ tend to infinity with $\ell'$ and $r'$ fixed and $\rho \rightarrow \infty$. In this case, 
by Theorem~\ref{th:tre} with the scaling in (\ref{alternativescale}), and using (\ref{eq:sub11}) and (\ref{eq:sub22}), we have
\beq
\mathcal{N}_\epsilon(\mathscr{B}_Q) = \frac{\mathcal{A}  \Omega^3 T}{3 \pi^3} +o(\mathcal{A}T) \;\;\; \mbox{as } T, \mathcal{A} \rightarrow \infty, 
\eeq
where $\mathcal{A}= \rho^2 \mathcal{A'}$, with $\mathcal{A}'$ fixed and $\rho \rightarrow \infty$.
Finally, we can check that the result is consistent with the heuristic calculation:  starting with the 
 number of space-wavenumber degrees of freedom per angular frequency $\omega$~\cite{migliore2, savarese} 
 \beq
 \mathcal{N}_\epsilon(\mathscr{T}_{4 \pi}) = \frac{\mathcal{A} \omega^2}{\pi^2}, 
 \label{h1}
 \eeq
integrating over the frequency bandwidth and multiplying by $T/\pi$, the result follows.
%

%
\subsection{Degrees of freedom of modulated signals}
The results  can be extended to handle the case of modulated signals. Consider  the case of a real signal modulating a sinusoid of carrier  frequency $\omega_c$, occupying a bandwidth $\Delta \omega =[\omega_1,\omega_2]$ centered around $\omega_c$, see Figure~\ref{fig:modulation}. 
\begin{figure}
\begin{center}
{
\scalebox{.7}{\includegraphics{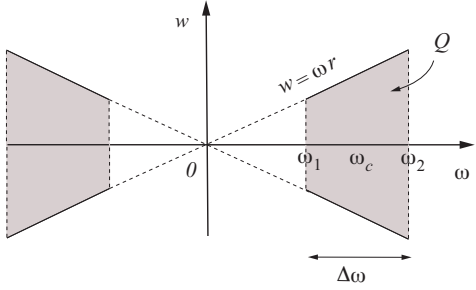}}}
\end{center}
\caption{Support set of a modulated signal in the transformed domain.}
\label{fig:modulation}
\end{figure}
In the two-dimensional case, 
following the same procedure of the previous sections, letting $T=\tau T'$, $r = \rho r'$, and $\rho, \tau \rightarrow \infty$, we obtain
\begin{align}
\mathcal{N}_\epsilon(\mathscr{B}_Q) &=  \frac{T}{\pi}  \frac{2 \pi r}{\pi} \frac{(\omega_2^2-\omega_1^2)}{2} +o(r  T)  \nonumber\\
&=   \frac{T \Delta \omega}{\pi}  \frac{2 \pi r  \omega_c}{\pi} + o(r  T),
\label{f1}
\end{align} 
as $r,T \rightarrow \infty.$

On the other hand, letting
$\Delta \omega = \rho \Delta \omega'$, $\omega_c = \rho \omega_c'$, and $\rho \rightarrow \infty$, we have
\beq
\mathcal{N}_\epsilon(\mathscr{B}_Q) = \frac{T \Delta \omega}{\pi}  \frac{2 \pi r  \omega_c}{\pi} + o(\omega_c \Delta \omega),
\eeq
as $\omega_c, \Delta \omega \rightarrow \infty$.
The number of degrees of freedom is proportional to the time-bandwidth product and to the  cut-set boundary normalized by  the radiating carrier.  

For spherical domains, letting $T = \tau T'$, $r= \rho r'$, and $\tau, \rho \rightarrow \infty$, we have
\begin{align}
\mathcal{N}_\epsilon(\mathscr{B}_Q) &= \frac{T}{\pi}  \frac{4 \pi r^2}{\pi^2}  \frac{(\omega_2^3-\omega_1^3)}{3} + o(T r^2) \nonumber \\
& = \frac{T \Delta \omega}{\pi}   \frac{4 \pi r^2}{\pi^2}     \frac{(\omega_{2}^2+\omega_{1}^2 + \omega_1 \omega_2)}{3} + o(T r^2) \nonumber \\
& =  \frac{T \Delta \omega}{\pi}   \frac{4 \pi r^2 \omega_c^2}{\pi^2}  (1+\delta) + o(T r^2),
\label{f2}
\end{align}
as $T,r \rightarrow \infty$, where  $\delta$ is a constant that depends only on the ratio $\Delta \omega/\omega_c$. Analogously,  letting $\Delta \omega = \rho \Delta \omega'$, $\omega_c = \rho \omega_c'$,  $\rho \rightarrow \infty$, we have
\begin{align}
\mathcal{N}_\epsilon(\mathscr{B}_Q) & =  \frac{T \Delta \omega}{\pi}   \frac{4 \pi r^2 \omega_c^2}{\pi^2}  (1+\delta) + o(\omega_c^2 \Delta \omega ),
\label{f3}
\end{align}
as $\omega_c, \Delta \omega \rightarrow \infty$.  

Results (\ref{f2}) and (\ref{f3}) can also be combined using the scaling matrices 
\beq
\mtx{A}= \left( \begin{array}{ccc}
\tau & 0 & 0 \\
0 & 1  & 0 \\
0 & 0  &  1
\end{array} \right), \; \;
\mtx{B}= \left( \begin{array}{ccc}
\beta & 0 & 0 \\
0 & \rho \beta & 0 \\
0& 0 & \rho  \beta
\end{array} \right),
\label{combine}
\eeq
and letting $\tau \beta, \rho \beta \rightarrow \infty$, obtaining 
\begin{align}
\mathcal{N}_\epsilon(\mathscr{B}_Q) & =  \frac{T \Delta \omega}{\pi}   \frac{4 \pi r^2 \omega_c^2}{\pi^2}  (1+\delta) + o(r^2   \omega_c^2 \Delta \omega  T),
\label{f4}
\end{align}
as $r \omega_c,  \Delta \omega  T\rightarrow \infty$.

Finally, for general rotationally symmetric domains, using (\ref{combine})  we have 
\beq
\mathcal{N}_\epsilon(\mathscr{B}_Q)  =  \frac{T \Delta \omega}{\pi}   \frac{\mathcal{A} \omega_c^2}{\pi^2}  (1+\delta) + o(\mathcal{A}  \omega_c^2  \Delta \omega  T),
\label{f5}
\eeq
as $\mathcal{A} \omega_c^2, \Delta \omega T \rightarrow \infty$, where $\mathcal{A}= \rho^2 \mathcal{A}'$ with $\mathcal{A}'$ fixed.

For narrowband signals $\Delta \omega/\omega_c \ll 1$, and the constant $\delta$ can be made arbitrarily small,  so that the number of degrees of freedom in three dimensions is essentially given by the first term of (\ref{f5}),
which is the natural extension of  of the single-frequency result in~\cite{migliore2, savarese} and reported in  (\ref{h1}), accounting for a non-zero frequency band around frequency $\pm \omega_c$.
It follows that   the number of degrees of freedom per unit time and per unit frequency band is essentially given by
\beq
N_0=\frac{\mathcal{A} \omega_c^2}{\pi^2}.
\label{eq:highenergy}
\eeq

\section{Concluding remarks}
\label{sec:conc}
A key insight of our analysis is that the amount of information, in terms of degrees of freedom, scales with the surface boundary, rather than with the volume of the space. Clearly,  field theory allows for a much larger number of possible field configurations \emph{inside} the radiating volume. Think for example of the number of standing waves insides a black body. This grows with the volume of the space rather than with the surface area. Our results apply to the information that one can gather about these volumetric configurations, from the point of view of an external observer, through electromagnetic propagation. They  pose an information-theoretic limit in terms of degrees of freedom at the scale of the surface boundary through which the field propagates.
Physically, this is due to the Green's propagation operator relating source currents to the radiated field, that essentially behaves as a spatial filter,  projecting the number of observable field configurations  onto a lower dimensional space~\cite{bucci1}. The resulting wavenumber bandlimitation of the field  dictates the geometric constraints on the support sets depicted in Figures~\ref{fig:PQ}, \ref{fig:PQ1}, and \ref{fig:modulation}, leading to our results. The conclusion is that Nature  ``hides'' three-dimensional field configurations, and the world appears to an external observer as having only an apparent three-dimensional informational structure, subject to a two-dimensional representation. Like for Plato's prisoners in the cave, ``the truth would be literally nothing  but the shadows of the images.''

Another important observation is that for any fixed size system the amount of information scales with the radiated frequency. We can increase the number of degrees of freedom as high as we want by transmitting signals modulating larger and larger  frequencies.  In wireless networks, this has been used to identify regimes of linear capacity scaling with the number of users~\cite{kaist, ozgur}.  In the context of electromagnetic imaging,  this allows to increase the spatial resolution of the constructed image by increasing the illumination frequency.  

Our results are based on a variation of Landau's theorem that allows to compute the number of degrees of freedom of square integrable, bandlimited fields in terms of Kolmogorov's $n$-width, using an alternative scaling of the space.  When degrees of freedom of electromagnetic signals are evaluated along a spatial cut-set boundary that separates transmitters and receivers in a wireless network, or between radiating elements and sensing devices in an electromagnetic remote sensing system, our results yield the effective number of parallel channels available through the cut-set boundary in the time-frequency and the space-wavenumber domain. Thus, they provide a bound on the amount of spatial  and frequency multiplexing achievable using arbitrary technologies and in arbitrary scattering environments. 

Relations between the number of degrees of freedom studied here and the Kolmogorov's $\epsilon$-entropy and $\epsilon$-capacities is well known, and follow from  the application of Mityagin's theorem~\cite{lorentz}. Usage of the number of degrees of freedom  to bound the Shannon capacity scaling of wireless networks are described in~\cite{migliore1} and \cite{migliore2}. Our results are limited to linear scalings of the support sets.
Extensions  to  non-linear  scalings suitable to describe signals with sparse supports would also be of interest. In this case, a  non-linear mapping of the support sets would need to achieve spectral concentration while retaining a structure composed of many vanishingly small subdomains.  The work in~\cite{stark} provides one step in this direction, but it is limited to the study of the zeroth order eigenvalue, rather than the whole phase transition of the eigenvalues. More generally, one could  study spectral concentration under different  structural constraints on the support sets, beside bandlimitation.  In this context,  connections with undersampled signal representations~\cite{donoho1} and compressed sensing~\cite{donoho2} for electromagnetic applications need to be explored.


\appendix
\subsection{Proof of Theorem~\ref{th:tre}}
\label{sec:proof}
We Let
$K_{\mtx{A},\mtx{B}}(\fx,\fy)$ be the kernel of the operator $\Op_{\mtx{A},\mtx{B}}  = \prD_{\mtx{A}P} \prB_{\mtx{B}Q} \prD_{\mtx{A}P}$ having eigenvalues $\{\lambda_k\}$.
We have
\beq
K_{\mtx{A},\mtx{B}}(\fx,\fy) = \mathds{1}_{\mtx{A}P}(\fx)  \mathds{1}_{\mtx{A}P}(\fy) h_{\mtx{B}}(\fx-\fy), 
\eeq
so that 
\beq
K_{\mtx{A},\mtx{B}}(\fx,\fx) =\mathds{1}_{\mtx{A}P}(\fx) h_{\mtx{B}}(0).
\label{eq:oft}
\eeq
All is required is to establish the following two lemmas, as they  imply the statement of the theorem using  a standard argument identical to the one in~\cite{szego}. A sketch of the argument is as follows. The two lemmas state that all eigenvalues must be either close to one or close to zero, since both their sum and the sum of their squares have the same scaling order. The sum then essentially corresponds to the number of non-zero eigenvalues and is of the order  of  $|\mtx{A}| |\mtx{B}|$.  
\newtheorem{lem}{Lemma}
\begin{lem}
 $\sum_k \lambda_k = |\mtx{A}|\,|\mtx{B}| \, (2 \pi)^{-N} m(P) m(Q)$.
\label{l1}
\end{lem}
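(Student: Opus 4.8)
The plan is to identify $\sum_k \lambda_k$ with the trace of the positive, self-adjoint operator $\Op_{\mtx{A},\mtx{B}}$ and to evaluate that trace as the integral of the kernel along the diagonal. Since $\prD_{\mtx{A}P}$ and $\prB_{\mtx{B}Q}$ are orthogonal projections, I would first write $\Op_{\mtx{A},\mtx{B}}=\prD_{\mtx{A}P}\prB_{\mtx{B}Q}\prD_{\mtx{A}P}=S^*S$, where $S=\prB_{\mtx{B}Q}\prD_{\mtx{A}P}$ has kernel $h_{\mtx{B}}(\fx-\fy)\mathds{1}_{\mtx{A}P}(\fy)$ and is Hilbert--Schmidt whenever $m(\mtx{A}P)$ and $m(\mtx{B}Q)$ are finite. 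This makes $\Op_{\mtx{A},\mtx{B}}$ trace-class with nonnegative eigenvalues $\{\lambda_k\}$, so that $\sum_k\lambda_k=\operatorname{tr}\Op_{\mtx{A},\mtx{B}}$, and the trace equals the integral of the diagonal kernel $K_{\mtx{A},\mtx{B}}(\fx,\fx)$ over $\mathbb{R}^N$.

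Next I would substitute the diagonal kernel (\ref{eq:oft}). Because $h_{\mtx{B}}(0)$ is a constant and $\mathds{1}_{\mtx{A}P}$ factors out, the trace reduces to
\beq
\sum_k\lambda_k=h_{\mtx{B}}(0)\int_{\mathbb{R}^N}\mathds{1}_{\mtx{A}P}(\fx)\,d\fx=h_{\mtx{B}}(0)\,m(\mtx{A}P)=h_{\mtx{B}}(0)\,|\mtx{A}|\,m(P),
\eeq
using the volume-scaling rule $m(\mtx{A}P)=|\mtx{A}|\,m(P)$. The only genuine computation is the value $h_{\mtx{B}}(0)$. Since $\FT h_{\mtx{B}}=\mathds{1}_{\mtx{B}Q}$, the Fourier inversion formula in the convention fixed in the excerpt gives $h_{\mtx{B}}(\fx)=(2\pi)^{-N}\int_{\mathbb{R}^N}\mathds{1}_{\mtx{B}Q}(\fu)\exp(i\,\fx\cdot\fu)\,d\fu$; evaluating at $\fx=0$ yields $h_{\mtx{B}}(0)=(2\pi)^{-N}m(\mtx{B}Q)=(2\pi)^{-N}|\mtx{B}|\,m(Q)$. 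Combining the two displays gives the asserted identity $\sum_k\lambda_k=|\mtx{A}|\,|\mtx{B}|\,(2\pi)^{-N}m(P)m(Q)$.

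The step that needs the most care is the passage $\operatorname{tr}\Op_{\mtx{A},\mtx{B}}=\int K_{\mtx{A},\mtx{B}}(\fx,\fx)\,d\fx$, since the diagonal-trace formula is not valid for arbitrary trace-class integral operators. I would justify it by noting that $h_{\mtx{B}}$ is continuous (indeed $h_{\mtx{B}}\in C_0$ as the inverse Fourier transform of the integrable function $\mathds{1}_{\mtx{B}Q}$), so that the kernel is bounded and well behaved in the transverse direction, and by exploiting the factorization $\Op_{\mtx{A},\mtx{B}}=S^*S$ to compute the trace instead as the Hilbert--Schmidt norm
\beq
\|S\|_{\mathrm{HS}}^2=\iint|h_{\mtx{B}}(\fx-\fy)|^2\,\mathds{1}_{\mtx{A}P}(\fy)\,d\fx\,d\fy=m(\mtx{A}P)\int_{\mathbb{R}^N}|h_{\mtx{B}}|^2,
\eeq
and then invoking Parseval, $\int|h_{\mtx{B}}|^2=(2\pi)^{-N}m(\mtx{B}Q)$, to reach the same answer. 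This alternative route avoids any appeal to Mercer-type diagonal formulas and makes the identity rigorous under the sole assumption that $P$ and $Q$ have finite measure.
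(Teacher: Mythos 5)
Your proof is correct, and the central computation---integrating the diagonal kernel to obtain $h_{\mtx{B}}(0)\,m(\mtx{A}P)$ and then evaluating $h_{\mtx{B}}(0)=(2\pi)^{-N}m(\mtx{B}Q)=(2\pi)^{-N}|\mtx{B}|\,m(Q)$ by Fourier inversion---is exactly the paper's. Where you diverge is in justifying the identity $\sum_k\lambda_k=\int K_{\mtx{A},\mtx{B}}(\fx,\fx)\,d\fx$: the paper expands the kernel in an orthonormal eigenbasis via Mercer's theorem and integrates term by term, whereas you factor $\Op_{\mtx{A},\mtx{B}}=S^{*}S$ with $S=\prB_{\mtx{B}Q}\prD_{\mtx{A}P}$ and compute $\operatorname{tr}(S^{*}S)=\|S\|_{\mathrm{HS}}^{2}$ directly, closing with Parseval. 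Your route is arguably the more robust of the two: Mercer's theorem in its standard form requires a continuous kernel on a compact set, while here the kernel is discontinuous across the boundary of $\mtx{A}P$ and that set need only have finite measure, so the paper's appeal to Mercer is somewhat informal. The Hilbert--Schmidt computation needs only $m(\mtx{A}P)<\infty$ and $m(\mtx{B}Q)<\infty$, and it establishes that $\Op_{\mtx{A},\mtx{B}}$ is trace class as a byproduct rather than assuming it. (That your two evaluations agree is no accident: since $\FT h_{\mtx{B}}$ is an indicator, hence idempotent, Parseval gives $\int|h_{\mtx{B}}|^{2}=(2\pi)^{-N}m(\mtx{B}Q)=h_{\mtx{B}}(0)$, which is precisely the consistency between $\operatorname{tr}\Op_{\mtx{A},\mtx{B}}$ computed on the diagonal and $\|S\|_{\mathrm{HS}}^{2}$.)
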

\begin{lem}
$\sum_k \lambda_k^2= |\mtx{A}|\, |\mtx{B}| \, (2 \pi)^{-N} m(P) m(Q) +o(|\mtx{A}| \, |\mtx{B}|)$.
\label{l2}
\end{lem}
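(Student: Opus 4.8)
\subsection*{Proof proposal}

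The plan is to prove the two lemmas by identifying $\sum_k \lambda_k$ with the trace of $\Op_{\mtx{A},\mtx{B}}$ and $\sum_k \lambda_k^2$ with its squared Hilbert--Schmidt norm, and then to reduce both to integrals against the autocorrelation of $\mathds{1}_P$, at which point the limiting hypothesis $\mtx{B^T}\mtx{A} U \to \mathbb{R}^N$ does the essential work. A preliminary scaling identity, obtained from $h_{\mtx{B}} = \FT^{-1}\mathds{1}_{\mtx{B}Q}$ by the change of variables $\fu = \mtx{B}\fz$, will be
\[
h_{\mtx{B}}(\fz) = |\mtx{B}|\, h(\mtx{B^T}\fz), \qquad h = \FT^{-1}\mathds{1}_Q,
\]
together with $h(0) = (2\pi)^{-N} m(Q)$ and the Plancherel relation $\int_{\mathbb{R}^N} |h(\fu)|^2 d\fu = (2\pi)^{-N} m(Q)$.

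First I would dispatch Lemma~\ref{l1}. Since $\Op_{\mtx{A},\mtx{B}} = (\prB_{\mtx{B}Q}\prD_{\mtx{A}P})^*(\prB_{\mtx{B}Q}\prD_{\mtx{A}P})$ is positive with bounded, compactly supported kernel, it is trace class, and its trace is the integral of the diagonal kernel~(\ref{eq:oft}):
\[
\sum_k \lambda_k = \int_{\mathbb{R}^N} K_{\mtx{A},\mtx{B}}(\fx,\fx)\, d\fx = h_{\mtx{B}}(0)\, m(\mtx{A}P).
\]
Substituting $m(\mtx{A}P) = |\mtx{A}|\, m(P)$ and $h_{\mtx{B}}(0) = |\mtx{B}|(2\pi)^{-N} m(Q)$ yields the stated (exact) value.

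For Lemma~\ref{l2}, self-adjointness gives $\sum_k \lambda_k^2 = \int\!\int |K_{\mtx{A},\mtx{B}}(\fx,\fy)|^2 d\fx\, d\fy = \int_{\mtx{A}P}\!\int_{\mtx{A}P} |h_{\mtx{B}}(\fx-\fy)|^2 d\fx\, d\fy$. The substitutions $\fx \mapsto \mtx{A}\fx$, $\fy \mapsto \mtx{A}\fy$, together with the scaling identity and the abbreviation $\mtx{C} = \mtx{B^T}\mtx{A}$, collapse this to $|\mtx{A}|^2 |\mtx{B}|^2 \int_P \int_P |h(\mtx{C}(\fx-\fy))|^2 d\fx\, d\fy$. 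Introducing the difference variable $\fz = \fx - \fy$ and the autocorrelation $g(\fz) = m(P \cap (P+\fz))$, and then the change of variables $\fu = \mtx{C}\fz$ (Jacobian $|\mtx{C}| = |\mtx{A}|\,|\mtx{B}|$), produces the clean expression
\[
\sum_k \lambda_k^2 = |\mtx{A}|\,|\mtx{B}| \int_{\mathbb{R}^N} |h(\fu)|^2\, g(\mtx{C}^{-1}\fu)\, d\fu.
\]
The limiting step is then dominated convergence: the hypothesis $\mtx{C} U \to \mathbb{R}^N$ says that $\mtx{C}$ expands in every direction, i.e.\ its smallest singular value tends to infinity, so $\mtx{C}^{-1}\fu \to 0$ for each fixed $\fu$; since $g$ is continuous at the origin with $g(0) = m(P)$ (by $L^2$-continuity of translation), we have $g(\mtx{C}^{-1}\fu) \to m(P)$ pointwise, while the integrand is dominated by the integrable $m(P)\,|h(\fu)|^2$. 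The integral therefore tends to $m(P)(2\pi)^{-N} m(Q)$, giving exactly $\sum_k \lambda_k^2 = |\mtx{A}|\,|\mtx{B}|(2\pi)^{-N} m(P) m(Q) + o(|\mtx{A}|\,|\mtx{B}|)$.

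I expect the main obstacle to be the rigorous handling of the limiting hypothesis: translating the set-theoretic statement $\mtx{B^T}\mtx{A} U \to \mathbb{R}^N$ into the quantitative claim $\sigma_{\min}(\mtx{C}) \to \infty$ and hence $\mtx{C}^{-1} \to 0$, and verifying that the single dominating function $m(P)\,|h(\fu)|^2$ is valid uniformly along the two-parameter limit $(\tau,\rho)\to(\tau_0,\rho_0)$ despite the possibly anisotropic degeneration of $\mtx{C}$. The trace and Hilbert--Schmidt identities themselves are routine given the kernel is bounded with compact support. Once both lemmas hold, we have $\sum_k \lambda_k^2 = \sum_k \lambda_k + o(\sum_k \lambda_k)$ with $0 \le \lambda_k \le 1$, so $\sum_k \lambda_k(1-\lambda_k) = o(|\mtx{A}|\,|\mtx{B}|)$ forces all but a $o(|\mtx{A}|\,|\mtx{B}|)$ fraction of eigenvalues to lie near $0$ or $1$; counting then gives $N_{\mtx{A},\mtx{B}}(\epsilon) = (2\pi)^{-N} m(P) m(Q)\,|\mtx{A}|\,|\mtx{B}| + o(|\mtx{A}|\,|\mtx{B}|)$, exactly as in the argument of~\cite{szego}.
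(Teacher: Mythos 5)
Your proof is correct and takes essentially the same route as the paper's: both identify $\sum_k\lambda_k^2$ with the Hilbert--Schmidt norm $\iint_{\mtx{A}P\times\mtx{A}P}|h_{\mtx{B}}(\fx-\fy)|^2\,d\fx\,d\fy$, rescale via $h_{\mtx{B}}(\fu)=|\mtx{B}|\,h(\mtx{B^T}\fu)$, and conclude by dominated convergence using the hypothesis $\mtx{B^T}\mtx{A}U\rightarrow\mathbb{R}^N$. The only difference is the order of integration: you integrate the autocorrelation $g\bigl((\mtx{B^T}\mtx{A})^{-1}\fu\bigr)$ against $|h(\fu)|^2$, whereas the paper integrates $F_{\mtx{A},\mtx{B}}(\fp)=\int_{\mtx{B^T}\mtx{A}(\fp-P)}|h|^2$ over $P$ and invokes Parseval pointwise at interior points --- the same double integral viewed the other way.
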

\begin{proof}[Proof of Lemma~\ref{l1}]
By Mercer's theorem there exists an orthonormal basis set $\{ \phi_k\}$ for $L^2(\mathbb{R}^N)$, such that 
\beq
K_{\mtx{A},\mtx{B}}(\fx,\fy) = \sum_k \lambda_k \phi_k(\fx)  \phi_k( \fy).
\eeq
By orthonormality and (\ref{eq:oft}), we have
\beq
 \int_{\mathbb{R}^N} \mathds{1}_{\mtx{A}P}(\fx) h_{\mtx{B}}(0) d\fx = \sum_k \lambda_k,
\eeq
and performing the computation 
\begin{align}
 \int_{\mathbb{R}^N} \mathds{1}_{\mtx{A}P}(\fx) h_{\mtx{B}}(0) d\fx &=h_{\mtx{B}}(0) m(\mtx{A}P) \nonumber\\
&= (2 \pi)^{-N} m(\mtx{B}Q) m(\mtx{A}P) \nonumber \\
&= (2 \pi)^{-N} |A| \; |B| \; m(Q) m(P) \nonumber \\
&= \sum_k \lambda_k,
\end{align}
establishes Lemma~\ref{l1}.
\end{proof}
\begin{proof}[Proof of Lemma~\ref{l2}]
We let $K^{(2)}_{\mtx{A},\mtx{B}}(\fx,\fy)$ be the kernel of the operator $\Op^2_{\mtx{A},\mtx{B}}=(\prD_{\mtx{A}P} \prB_{\mtx{B}Q} \prD_{\mtx{A}P})^2$ having eigenvalues $\{\lambda_k^2\}$. We have
\begin{align}
K_{\mtx{A},\mtx{B}}^{(2)}(\fx,\fy) &= \int_{\mathbb{R}^N} K_{\mtx{A},\mtx{B}}(\fx,\fz) K_{\mtx{A},\mtx{B}}(\fz,\fy) d\fz \nonumber \\
&= \int_{\mathbb{R}^N}  \mathds{1}_{\mtx{A}P}(\fx) \mathds{1}_{\mtx{A}P}(\fz) h_{\mtx{B}}(\fx-\fz) \nonumber \\ 
&  \mathds{1}_{\mtx{A}P}(\fz) \mathds{1}_{\mtx{A}P}(\fy) h_{\mtx{B}}(\fz-\fy) d\fz  \nonumber \\
&=  \mathds{1}_{\mtx{A}P}(\fx)   \mathds{1}_{\mtx{A}P}(\fy) \int_{\mtx{A}P} h_{\mtx{B}}(\fx-\fz) h_{\mtx{B}}(\fz-\fy) d\fz. 
\label{eq:intermediate}
\end{align}
By Mercer's theorem there exists an orthonormal basis set $\{ \psi_n\}$ for $L^2(\mathbb{R}^N)$, such that 
\beq
K_{\mtx{A},\mtx{B}}^{(2)}(\fx,\fy) =  \sum_k \lambda_k^2 \psi_k(\fx)  \psi_k( \fy).
\eeq
By orthonormality, we have
\beq
\int_{\mathbb{R}^N} K^{(2)}_{\mtx{A},\mtx{B}}(\fx,\fx) d\fx = \sum_k  \lambda_k^2.
\label{eq:est}
\eeq
By (\ref{eq:intermediate}) it follows that
\beq
\int_{\mathbb{R}^N} K^{(2)}_{\mtx{A},\mtx{B}}(\fx,\fx) d\fx = \int_{\mtx{A}P}\int_{\mtx{A}P} |h_{\mtx{B}}(\fx-\fy)|^2 d \fx d\fy. 
\eeq
We apply the change of variable $\fx =\mtx{A} \fp$, obtaining
\begin{align}
\int_{\mathbb{R}^N} K^{(2)}_{\mtx{A},\mtx{B}}(\fx,\fx) d\fx  &= |\mtx{A}| \,  \int_{\mtx{A} P} d\fy  \int_{P}  |h_{\mtx{B}}(\mtx{A} \fp - \fy)|^2 d\fp\nonumber \\
&=  |\mtx{A}| \,  \int_{P} d\fp  \int_{\mtx{A} P}  |h_{\mtx{B}}(\mtx{A} \fp - \fy)|^2 d\fy. 
\end{align}
We apply another change of variable  $\fu= \mtx{A} \fp - \fy$, obtaining
\begin{align}
\int_{\mathbb{R}^N} K^{(2)}_{\mtx{A},\mtx{B}}(\fx,\fx) d\fx &= |\mtx{A}| \int_{P}d\fp   \int_{\mtx{A} (\fp-P)}   |h_{\mtx{B}}(\fu)|^2 d\fu.
\label{eq:K}
\end{align}
Substituting (\ref{eq:K}) into (\ref{eq:est}) and dividing by $|\mtx{A}| \, |\mtx{B}|$, we have
\beq 
\frac{1}{|\mtx{A}| \, |\mtx{B}|}  \sum_k \lambda^2_k = \int_{P}   F_{\mtx{A},\mtx{B}}(\fp) d\fp,
\eeq
where
\begin{align}
F_{\mtx{A},\mtx{B}}(\fp) &= |\mtx{B}|^{-1}  \int_{\mtx{A} (\fp-P)}   |h_\mtx{B} (\fu)|^2 d\fu.
\label{eq:q}
\end{align}
The function $F_{\mtx{A},\mtx{B}}(\fp)$ is dominated as 
\begin{align}
F_{\mtx{A},\mtx{B}}(\fp) &\leq |\mtx{B}|^{-1} \int_{\mathbb{R}^N}  |h_{\mtx{B}}(\fu)|^2 d\fu \nonumber \\
&= (2 \pi)^{-N}  |\mtx{B}|^{-1}m(\mtx{B}Q) \nonumber \\
&= (2 \pi)^{-N} m(Q), 
\end{align}
that is integrable over $P$. Next, we show that
\beq
\lim_{(\tau,\rho) \rightarrow (\tau_0,\rho_0)}F_{\mtx{A},\mtx{B}}(\fp) = (2 \pi)^{-N} m(Q),
\label{eq:final}
\eeq
 so that by Lebesgue's dominated convergence theorem, we have
\begin{align}
\lim_{(\tau,\rho) \rightarrow (\tau_0,\rho_0)} \frac{1}{|\mtx{A}| \, |\mtx{B}|}   \sum_k \lambda^2_k  &=  \int_{P} \lim_{(\tau,\rho) \rightarrow (\tau_0,\rho_0)} F_{\mtx{A},\mtx{B}}(\fp) d\fp \nonumber \\
&= (2 \pi)^{-N} m(P) m(Q), 
\end{align}
establishing Lemma~\ref{l2}.

What remains is to prove (\ref{eq:final}). 
Substituting the result in Lemma~\ref{l3} below into  (\ref{eq:q}) and  performing the change of variable $\mtx{B^T} \fu =\fv$, we have 
\begin{align}
F_{\mtx{A},\mtx{B}}(\fp) &=  |\mtx{B}|^{-1} \int_{\mtx{A}(\fp-P)}  |\mtx{B}|^2 \, |h(\mtx{B^T} \fu)|^2 d\fu\nonumber \\
&=   \int_{\mtx{B^T} \mtx{A}(\fp-P)} |h(\fv)|^2 d\fv.
\label{parsi}
\end{align}
Since the boundary of $P$ has measure zero~\footnote{If the boundary has positive measure, one can obtain the same result using an approximation argument as in~\cite{szego}.}, we can assume that $\textbf{p}$ is an interior point of $P$, so that the set  $\fp-P$  contains a ball of non-zero measure centered at the origin. 
It then follows from Parseval's theorem that the integral (\ref{parsi}) converges to $(2 \pi)^{-N} m(Q)$ as $(\tau,\rho) \rightarrow (\tau_0, \rho_0)$, and the proof is complete.
\end{proof}
\begin{lem}
$h_{\mtx{B}} (\fu) = |\mtx{B}|\, h(\mtx{B^T} \fu)$
\label{l3}
\end{lem}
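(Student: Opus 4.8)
The plan is to compute both $h$ and $h_{\mtx{B}}$ explicitly as inverse Fourier transforms and to relate them by a linear change of variables carried out in the frequency domain. Recall from~(\ref{eq:sinc}) that $h$ is characterized by $\FT h = \mathds{1}_Q$, and that $h_{\mtx{B}}$ is by definition the convolution kernel of the projection $\prB_{\mtx{B}Q}$, so that $\FT h_{\mtx{B}} = \mathds{1}_{\mtx{B}Q}$. Inverting the Fourier transform with the convention fixed at the start of Section~\ref{sec:proof}, I would write
\beq
h(\fx) = (2 \pi)^{-N}\int_Q \exp(i\,\fx\cdot\fu)\,d\fu, \qquad h_{\mtx{B}}(\fx) = (2 \pi)^{-N}\int_{\mtx{B}Q}\exp(i\,\fx\cdot\fu)\,d\fu.
\eeq
The entire lemma then reduces to transforming the second integral back to one over $Q$.

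The key step is the substitution $\fu = \mtx{B}\fv$ in the integral defining $h_{\mtx{B}}$. As $\fu$ ranges over $\mtx{B}Q$ the new variable $\fv$ ranges over $Q$, and the Jacobian contributes a factor $|\mtx{B}|$ (the determinant, positive for all the scaling matrices used in the applications). This yields
\beq
h_{\mtx{B}}(\fx) = |\mtx{B}|\,(2 \pi)^{-N}\int_{Q}\exp\!\big(i\,\fx\cdot\mtx{B}\fv\big)\,d\fv.
\eeq
I would then use the adjoint identity $\fx\cdot\mtx{B}\fv = (\mtx{B^T}\fx)\cdot\fv$, which moves the linear map off the frequency variable and onto the spatial argument, turning the integrand into $\exp\!\big(i\,(\mtx{B^T}\fx)\cdot\fv\big)$. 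Recognizing the remaining integral as exactly the defining expression for $h$ evaluated at $\mtx{B^T}\fx$ gives $h_{\mtx{B}}(\fx) = |\mtx{B}|\,h(\mtx{B^T}\fx)$, which is the claim.

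This argument is essentially bookkeeping, so there is no genuine analytic obstacle; the only point requiring care is the emergence of the \emph{transpose} $\mtx{B^T}$ rather than $\mtx{B}$ itself in the argument of $h$. It is precisely this transpose, arising from the pairing $\fx\cdot\mtx{B}\fv = (\mtx{B^T}\fx)\cdot\fv$ inside the exponential, that makes the lemma the correct dual of the spatial scaling and that feeds the limiting hypothesis $\mtx{B^T}\mtx{A}\,U \to \mathbb{R}^N$ of Theorem~\ref{th:tre}, via the change of variable $\mtx{B^T}\fu = \fv$ used in~(\ref{parsi}). A secondary technical caveat is that the change of variables strictly requires the absolute value of the determinant; under the paper's convention $|\mtx{B}|$ denotes the determinant, which is nonnegative for the diagonal scaling matrices employed, so the two coincide and no sign correction is needed.
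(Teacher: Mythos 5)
Your proof is correct and follows essentially the same route as the paper's: both invert $\FT h_{\mtx{B}} = \mathds{1}_{\mtx{B}Q}$, perform the change of variables $\fu = \mtx{B}\fv$ picking up the Jacobian $|\mtx{B}|$, and use the pairing $\fx\cdot\mtx{B}\fv = (\mtx{B^T}\fx)\cdot\fv$ to identify the result as $|\mtx{B}|\,h(\mtx{B^T}\fx)$. Your remark that the Jacobian is strictly $|\det\mtx{B}|$ (coinciding with the determinant for the positive diagonal scalings used) is a fair, if minor, point of extra care.
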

\begin{proof}[Proof of Lemma~\ref{l3}]
We have
\beq
\mathcal{F} h_{\mtx{B}}(\fy) = \mathds{1}_{\mtx{B} Q} (\fy) =  \mathds{1}_{Q}(\mtx{B}^{-1} \fy), 
\eeq
and the proof follows by computing the inverse transform
\begin{align}
h_{\mtx{B}}(\fu) &= \mathcal{F}^{-1} \mathds{1}_{\mtx{B} Q}(\fu) \nonumber \\
&= (2\pi)^{-N} \int_{\mathbb{R}^N} \mathds{1}_{\mtx{B} Q}(\fy) \exp(i \,  \fu \cdot \fy) d\fy \nonumber \\
&= (2\pi)^{-N} \int_{\mathbb{R}^N}\mathds{1}_{Q}(\mtx{B}^{-1} \fy) \exp(i \fu \cdot (\mtx{B} \mtx{B^{-1}}  \fy)) d\fy \nonumber \\
&= (2\pi)^{-N} |\mtx{B}|  \int_{\mathbb{R}^N}\mathds{1}_{Q}(\fz)  \exp(i \fu \cdot (\mtx{B} \fz)) d\fz \nonumber \\
&= (2\pi)^{-N} |\mtx{B}|  \int_{\mathbb{R}^N}\mathds{1}_{Q}(\fz)  \exp(i \fz \cdot (\mtx{B^T} \fu)) d\fz  \nonumber \\
&= |\mtx{B}| h(\mtx{B^T} \fu)
\end{align}
\end{proof}

\section*{Acknowledgment}
The author would like to thank Todd Kemp for navigating him through the subtleties of operator theory and Taehyung J. Lim for carefully reviewing the manuscript.
This work was partially supported by AFRL Award No. P07000236273 and Matrix Research inc. award No. FA8650-13-M-1556 through a subcontract with AFRL.

\end{document}